\newcommand{\N}{\mathbb{N}}
\newcommand{\R}{\mathbb{R}}
\newcommand{\Rp}{\mathbb{R}_{>0}}
\newcommand{\Rnn}{\mathbb{R}_{\geq 0}}
\DeclareMathOperator{\im}{im} 
\renewcommand{\k}{{\kappa}}
\newcommand{\C}{\mathcal{C}} 
\newcommand{\mX}{\mathcal{X}}   
\newcommand{\mC}{\mathcal{C}}    
\newcommand{\mR}{\mathcal{R}} 
\newcommand{\mL}{\mathcal{L}}
\newcommand{\st}{\, | \, }
\newtheorem{theorem}{Theorem} 
\newtheorem{lemma}[theorem]{Lemma} 
\newtheorem{corollary}[theorem]{Corollary} 
\theoremstyle{definition}
\newtheorem{remark}[theorem]{Remark} 
\newtheorem{example}[theorem]{Example}
\newtheorem{definition}[theorem]{Definition}
\newenvironment{enumerate*}[1][{}]{\begin{itemize}}{\end{itemize}}
\newcommand{\edit}[1]{{\color{black}#1}}
\title{Addition of flow reactions preserving multistationarity and bistability}
\author{Daniele Cappelletti\footnotemark[1] \and Elisenda Feliu\footnotemark[2] \and Carsten Wiuf\footnotemark[2]}
\date{\today}
\begin{document}

\maketitle

\renewcommand{\thefootnote}{\fnsymbol{footnote}}
\footnotetext[1]{Department of Biosystems Science and Engineering, ETH Zurich, Mattenstrasse 26, 4058 Basel, Switzerland.}
\footnotetext[2]{Department of Mathematical Sciences, University of Copenhagen, Universitetsparken 5, 2100 Copenhagen, Denmark.}
\renewcommand{\thefootnote}{\arabic{footnote}}

\begin{abstract}
We consider the question whether a chemical reaction network preserves the number and stability of its positive steady states upon inclusion of inflow and outflow reactions. Often a model of a reaction network is presented without inflows and outflows, while in fact some of the species might be degraded or leaked to the environment, or be synthesized or transported into the system. We provide a sufficient and easy-to-check criterion based on the stoichiometry of the reaction network alone and discuss examples from systems biology.

\medskip
\textbf{Keywords: } multistationarity; open system; steady states; model reduction; reaction network
\end{abstract}

\pagestyle{myheadings}
\thispagestyle{plain}
\markboth{D. Cappelletti, E. Feliu, C. Wiuf}{Addition of flow reactions preserving multistationarity and bistability}

\section{Introduction}

Bistability and  multistationarity in general are considered  important biological mechanisms, providing explanations for co-existence of differentiable phenotypes and switch-like behaviour  \cite{Legewie:2005hw,palani}.
The question of whether bistability is present in a given system therefore arises naturally in many contexts \cite{Liu:2011p964,Markevich-mapk,harrington-feliu,Nguyen:2011p975,Eissing2004}. However, it is not straightforward to decide \emph{a priori} whether this is the case.

The objects of interest here are reaction networks describing  the evolution of  concentrations of chemical species over time, and modeled by means of systems of autonomous Ordinary Differential Equations (ODEs). 
Such an ODE system typically admits linear conservation laws, that is, linear first integrals, due to the reactions alone, independently of the kinetics. The first integrals restrict the dynamics of the ODE system to the so-called stoichiometric compatibility classes, and questions about  the existence of multiple steady states and their stability properties are to be addressed relatively to each class. Furthermore, the ODE system depends  on (potentially many) unknown parameters, which adds to the  difficulty of the problem as the  number and stability of the steady states must be investigated for general parameters. In particular, a reaction network is said to be multistationary, or bistable, if that is the case for some choice of parameter values. 

A successful strategy to determine whether a network is multistationary or bistable  is the following. First the number and stability of the steady states of a reduced reaction network is  studied, and then these steady states  (and their properties) are ``lifted'' to the original network. As the ODE system arises from a reaction network, a substantial amount of recent work has focused on determining modifications of  the reactions that preserve properties at steady state upon lifting. Specifically, we consider two reaction networks $F$ and $G$, with respective kinetic rates, and aim to prove statements of the form
\begin{center}
``Provided (\dots),  if $F$ has $\ell$ positive/\edit{asymptotically stable/exponentially stable}/unstable \\ steady states for some  parameter choice, then   so does $G$ for some parameter choice.''
\end{center}

Perhaps,  the first  work in this direction is due to Craciun and Feinberg~\cite{craciun-feinberg}. They show that multistationarity is preserved if  a reaction of the form $X\ce{<=>} 0$ is added to a network $F$ for all species $X$ in $F$. 
Subsequent work by Joshi and Shiu~\cite{joshi-shiu-II} consider the case where $G$ is obtained from $F$ by adding reactions in such a way that the stoichiometric compatibility classes are preserved. They also consider the case of \emph{embedded networks}, where $G$ is obtained  from $F$ by adding species in a specific way. Feliu and Wiuf \cite{feliu:intermediates}  show that the number and properties of  the steady states of a network are preserved upon  the addition of \emph{intermediate species}. Banaji and Pantea \cite{banaji-pantea-inheritance} introduce additional operations that preserve  steady states properties.

Here we revisit the situation in which reactions of the form
\[ 0\ce{->} X,\qquad X\ce{->} 0,\] 
called \emph{ inflow}  and \emph{outflow} reactions, respectively, and jointly \emph{flow} reactions, are added to $F$, where  $X$ is a species already in $F$. 
The addition of such reactions destroys the linear first integrals involving the concentration of $X$ and hence increases the actual dimension of the ODE system. Some recent methods to count the number of steady states and to determine their stability 
rely heavily on the existence of a parametrization of the steady state manifold, not restricted to a particular stoichiometric compatibility class \cite{FeliuPlos,torres:bistability,dickenstein:regions,conradi-mincheva,PerezMillan}. Finding such parametrizations often requires sufficient freedom, which comes from the codimension of the stoichiometric compatibility classes. As a consequence, the direct determination of the number and stability of the steady states of a reaction network with flow reactions is much harder than for networks without them. 

As mentioned above, if both flow reactions are added for \emph{all} species of $F$, then statements of the desired type can be obtained (lifting of steady states appears first in \cite{craciun-feinberg}; stability in \cite{banaji-pantea-inheritance}).  Hence, the network without flow reactions provides information about the network with all flow reactions. 
However, from a biochemical or metabolic point of view, the network with all flow reactions makes generally little sense. Here, $0\ce{->}X$ often represents synthesis of $X$ or transport of $X$ into a compartment, while $X\ce{->} 0$ represents degradation\edit{, dilution} or transport out of a compartment. Modeling, for example, the inflow of a protein complex, such as a kinase-substrate complex, is in general not meaningful. Indeed, in realistic models inflow and outflow reactions are only considered for selected \edit{(possibly different) sets of} species. \edit{As an example, in biomolecular models it is frequently the case that an outflow for all chemical species is present, but only some species participate in an inflow reaction.} 

The discussion raises the following question: 
\begin{center}
What \edit{inflow and outflow} reactions can be added to a reaction network while preserving \\ the number and stability \edit{properties} of the steady states?
\end{center}
(in the sense discussed above).
We should not expect that an arbitrary  selection of flow reactions is allowed. As a simple illustration, consider the reaction network $X_1\ce{<=>} X_2$ with mass-action kinetics. It has one positive steady state in each stoichiometric compatibility class (for any choice of reaction rate constants). By adding the outflow reaction $X_2\ce{->} 0$, the network has no positive steady states for any choice of reaction rate constants. 

In this paper we give a sufficient and easy-to-check criterion based on  stoichiometry alone to decide on the question above. In particular, the criterion \edit{gives sufficient conditions on the choice of flow reactions as follows}. 
\edit{We ask for the existence of a basis of conservation laws such that, for every conservation law, the species corresponding to non-zero coefficients are either all in added outflow reactions or none is.  Each species in an inflow reaction must either also be in an outflow reaction, or not conserved in the original network. Finally, we require the restriction of the conservation laws to the subset of inflow species to generate the same conserved quantities as the original conservation laws. These statements are properly formalised in Section~\ref{sec:thm}. } We give various examples and illustrate, also by example, that our conditions are not necessary.

\section{Reaction Networks and Steady States}\label{sec:reactions}

Let $\Rp$ and $\Rnn$ denote the set of positive real numbers and the set of non-negative real numbers, respectively. Let $\N$ be the set of non-negative integers. $\langle v_1,\ldots,v_k\rangle$ denotes the linear span of the vectors $v_1,\ldots,v_k$ (in some vector space).

\medskip

A \emph{reaction network} $F=(\mC,\mR)$ on a non-empty set $\mX=\{X_1,\dots,X_{n}\}$ is a directed graph without self-edges whose nodes are linear combinations in $\mX$ with non-negative integer coefficients. The elements $y\in \mC$ are  \emph{complexes} and  of the form $y=\sum_{i=1}^n \lambda_i X_i$ with $\lambda_i\in \N$, $i=1,\dots,n$. 
The elements in $\mR$ are  \emph{reactions}. It is assumed that there are no isolated nodes, that is, every complex is part of a reaction. We further consider the set of reactions to be  ordered and let $m$ be its cardinality. 

\edit{Note that $\mC$ might include $0$, the so-called zero complex. }
For $X\in\mX$, recall from the introduction that the reactions 
\[X\rightarrow 0\quad \textrm{ and }\quad 0\to X\]
 are outflow and inflow reactions, respectively, and jointly  referred to as flow reactions. 
\edit{If at least one of the above flow reactions is present, then} $X$ is said to be a flow species. 

We identify  the species $X_i$  with the $i$-th canonical vector of $\R^{n}$ with $1$ in the $i$-th position and zeroes elsewhere. Hence, each complex $y\in \mC$ is a vector in $\R^n$. 
 The \emph{stoichiometric matrix} $N\in \R^{n\times m}$ of $F$ is the matrix whose $j$-th column is the vector $y'-y$, where $y\rightarrow y'$ is the $j$-th reaction. 
  In particular, the $(i,j)$-th entry of $N$ encodes the net production of species $X_i$ in the $j$-th reaction.
The \emph{stoichiometric subspace}  $S\subseteq \R^{n}$ of $F$ is the span of the column vectors of $N$, 
   \[ S:=\im(N)\subseteq \R^{n}.\] 
    We denote by   $s$ the dimension of $S$ (that is, the rank of $N$) and by $d$ the dimension of the orthogonal complement subspace $S^\perp$ of $S$. Hence  $s+d=n$.  

 The species concentrations  change over time  as a consequence of the reactions taking place. To describe the time evolution  we introduce a kinetics and the species-formation rate function. 
A \emph{kinetics} for   $F$   is a $\mC^1$-function 
\[ K \colon \Omega \longrightarrow \R^{m}_{\geq 0},\]
where $\Rp^{n} \subseteq \Omega \subseteq \Rnn^{n}$ (differentiability is with respect to the open set $\R^n_{>0}$).
The entry $K_j(x)$ is   called the \emph{rate of the $j$-th reaction}. 
A common choice of kinetics is   \emph{mass-action kinetics}, with $ \Omega=\R^{n}_{\geq 0}$ and 
\[ K_j(x)= \k_j \prod_{i=1}^{n} x_i^{y_i},\qquad\textrm{if }y\rightarrow y'\textrm{ is the }j\textrm{-th reaction},\]
where $\k_j>0$ is the \emph{reaction rate constant} of the reaction. 
Under this kinetics,  the reactions are usually labeled with the reaction rate constants.

The \emph{species-formation rate function} of $F$ with kinetics $K$ is the map $f\colon \Omega \rightarrow S\subseteq \R^{n}$ defined by
\[ f(x) :=N K(x).\] 
The dynamics of the species concentrations  of the network $F$ with kinetics $K$  is described by a set of ODEs given by the species-formation rate function: 
\begin{align} \label{eq:ode}
\dot{x} &=f(x), \qquad x\in \Omega,
\end{align}
where $\dot{x}=\dot{x}(t)$ denotes the derivative of $x(t)$ with respect to time $t$. 

If $K_j(x)$ vanishes whenever $x_i=0$ for $i$ an index for which $N_{ij}$ is negative (as it is the case for \edit{common kinetics including} mass-action kinetics), then both $  \Omega$ and $\R^n_{>0}$ are forward invariant by the solutions of \eqref{eq:ode} \edit{\cite{volpert}.} 
Additionally, as $\im(f) \subseteq S$, a solution of \eqref{eq:ode} is confined to an invariant linear space of the form $x_0+S$, where $x_0\in \Omega$ is the initial point of the solution. The   set 
$(x_0+S)\cap \Omega$ is called a \emph{stoichiometric compatibility class}. 
Given a matrix $W\in \R^{d\times n}$ whose rows form a basis of $S^\perp$,  
\edit{the elements $x\in (x_0+S)\cap \Omega$ are the solutions to the equation $W x = W x_0$ in $\Omega$. Hence, }
the stoichiometric compatibility classes might be parametrized by $T=(T_1,\ldots,T_{d})\in W(\Omega) \subseteq \R^{d}$ as follows:
\begin{equation}\label{eq:stoichclass}
 \mL_{W,T} := \big \{x\in \Omega \st W x = T\big \}.
\end{equation}
\edit{If $W$ is fixed, }given $x_0\in \Omega$, $x_0\in \mL_{W,T}$ \edit{only if $T=W x_0$}. The vector $T$ is commonly referred to as the vector of \edit{\emph{conserved quantities}} and any relation $\sum_{i=1}^n \omega_i x_i=T$, with $\omega\in S^\perp$, is called a \emph{conservation law}. Consequently, we call a matrix $W$ whose rows form a basis of $S^\perp$ a \emph{matrix of conservation laws}.

\begin{definition}
Given a vector $u$ in $\R^n$ we define its \emph{support} to be the subset of species $X_i\in \mX$ where $u_i\neq 0$. 
We say a species $X_i$ is \emph{non-conserved} if it is not in the support of any vector in $S^\perp$, that is, the $i$-th canonical vector of $\R^n$  belongs to $S$. 
\end{definition}

Note that $X_i$ is non-conserved if and only if the $i$-th column of a matrix of conservation laws is zero.

\begin{example}
\label{ex:main} 
Consider the reaction network  with mass-action kinetics, given by 
\[ X_1+X_4 \ce{<=>[\k_1][\k_2]} X_2  \ce{<=>[\k_3][\k_4]} X_3+X_4\]
where $\k_1,\k_2,\k_3,\k_4$ are positive constants.  
The dynamics of the species concentrations are described by the following ODE system:
\begin{align*}
\dot{x}_1 &= -\k_1 x_1x_4 + \k_2 x_2  & \dot{x}_2 &= \k_1 x_1x_4 -(\k_2+\k_3) x_2 + \k_4 x_3x_4   \\
\dot{x}_3 &= \k_3 x_2 - \k_4 x_3x_4  & \dot{x}_4 &= -\k_1 x_1x_4 + (\k_2+\k_3) x_2  - \k_4 x_3x_4.
\end{align*}
The stoichiometric matrix and the stoichiometric subspace are
\[  N= \begin{pmatrix}
 -1 & \phantom{-}1 & \phantom{-}0 & \phantom{-}0 \\
\phantom{-}1 & -1 & -1 & \phantom{-}1 \\
  \phantom{-}0 & \phantom{-}0 & \phantom{-}1 & -1 \\
 -1 & \phantom{-}1 & \phantom{-}1 & -1 
 \end{pmatrix}, \qquad S = \Big\langle (-1,1,0,-1),(0,-1,1,1)\Big\rangle.\]
 Therefore $n=4,m=4,s=2,$ and $d=2$.
  \end{example}

The \emph{steady states} of a network $F$ with kinetics $K$ are the solutions to the system of equations
\begin{equation}\label{eq:ss}
\edit{f(x)=N K(x)=0, \qquad x\in \Omega,}
\end{equation}
which we refer to as the \emph{steady state equations}.
If $N'$ is any matrix such that $\ker(N)=\ker(N')$, then the steady states of the network are 
precisely the  solutions to $N' K(x)=0$. In particular, if the rank of $N$ is $s$, then there exists a matrix $N'\in \R^{s\times m}$ of maximal rank such that $\ker(N)=\ker(N')$. Therefore, the system of $n$ equations \eqref{eq:ss}  can always be reduced to an equivalent system of $s$ equations.
 
As we consider the dynamics of the system confined to  the stoichiometric compatibility classes, questions about the number, stability or other properties of the steady states will be addressed relatively to a given stoichiometric compatibility class.
Specifically, the steady states of a network with kinetics $K$  in a stoichiometric compatibility class $\mL_{W,T}$ are the   solutions to the $n$ equations
\[ N' K(x)=0 \quad  \text{and} \quad  W x = T,\qquad x\in \Omega,\] 
where $N'\in \R^{s\times m}$ is such that $\ker(N)=\ker(N')$. 
We define accordingly  
\begin{equation}\label{eq:Phi}
\Phi(x):=  \big(N' K(x), Wx - T  \big) \in \R^{s} \times \edit{\R^{d}} \equiv \R^{n},\qquad x\in \Omega,
\end{equation}
such that  the steady states in   $\mL_{W,T}$ are the solutions to $\Phi(x)=0$. 
We say a network with kinetics $K$ is \emph{multistationary} if there exists a stoichiometric compatibility class containing at least two positive steady states.

Recall that the \emph{Jacobian} $J_f(x^*)$ of a $\mC^1$-map $f\colon U\subseteq \R^n\rightarrow \R^n$ evaluated at $x^*\in \text{int}(U)$ is the $n\times n$ matrix such that the $(i,j)$-th entry is
$\frac{\partial f_i }{\partial x_j}(x^*)$.  
We say that a steady state $x^*$    is \emph{non-degenerate}  if  $J_{f}(x^*)$ is non-singular on $S$, i.e. $\ker(J_{f}(x^*))\cap S=\{0\}$. The following  lemma is  proved in \cite{wiuf-feliu}.

\begin{lemma}\label{lemma:nondeg}
A steady state $x^*\in \Omega$ in the stoichiometric compatibility class  $\mL_{W,T}$  is non-degenerate if  and only if the Jacobian $J_\Phi(x^*)$ of $\Phi$ defined in \eqref{eq:Phi} evaluated at $x^*$ is non-singular on $\R^{n}$, that is,  if and only if $\det(J_{\Phi}(x^*))\neq 0$.
\end{lemma}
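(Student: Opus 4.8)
The plan is to reduce the statement to an elementary linear-algebra identity comparing the kernels of the two Jacobians evaluated at $x^*$. First I would write both Jacobians explicitly. Since $f(x)=NK(x)$ and the $x$-dependence enters only through $K$, the chain rule gives $J_f(x^*)=N\,J_K(x^*)$, where $J_K(x^*)\in\R^{m\times n}$ is the Jacobian of the kinetics. Abbreviating $A:=J_K(x^*)$, the map $\Phi(x)=(N'K(x),Wx-T)$ has Jacobian
\[
J_\Phi(x^*)=\begin{pmatrix} N'A \\ W \end{pmatrix}\in\R^{n\times n},
\]
where the two blocks have sizes $s\times n$ and $d\times n$, which assemble into a square matrix precisely because $s+d=n$. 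By definition, $x^*$ is non-degenerate iff $\ker(J_f(x^*))\cap S=\{0\}$, while $J_\Phi(x^*)$ is non-singular on $\R^n$ (equivalently $\det(J_\Phi(x^*))\neq 0$) iff $\ker(J_\Phi(x^*))=\{0\}$. Hence it suffices to show that these two kernels coincide.

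Next I would compute $\ker(J_\Phi(x^*))$ directly. A vector $v\in\R^n$ lies in it exactly when $N'Av=0$ and $Wv=0$. For the second condition, since the rows of $W$ form a basis of $S^\perp$ one has $\ker(W)=(S^\perp)^\perp=S$, so $Wv=0$ is equivalent to $v\in S$. For the first condition, the hypothesis $\ker(N)=\ker(N')$ does the essential work: $N'Av=0 \iff Av\in\ker(N')=\ker(N)\iff NAv=0$, which shows $\ker(N'A)=\ker(NA)=\ker(J_f(x^*))$. Combining the two conditions yields
\[
\ker(J_\Phi(x^*))=\ker(N'A)\cap\ker(W)=\ker(J_f(x^*))\cap S.
\]

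Finally I would read off the equivalence: $J_\Phi(x^*)$ is non-singular on $\R^n$ iff $\ker(J_\Phi(x^*))=\{0\}$, which by the displayed identity holds iff $\ker(J_f(x^*))\cap S=\{0\}$, i.e. iff $x^*$ is non-degenerate. I do not expect a genuine obstacle; the one point needing care is the equivalence $\ker(N'A)=\ker(NA)$, where it is crucial to invoke $\ker(N)=\ker(N')$ rather than merely equality of ranks—indeed the whole argument hinges on passing freely between $N$ and $N'$ via their common kernel. A secondary point worth stating cleanly is that the Jacobians are evaluated at an interior point of $\Omega$, so that differentiability of $K$ is available there, consistent with the hypotheses of Lemma~\ref{lemma:nondeg}.
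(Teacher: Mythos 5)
Your proof is correct. The paper does not prove Lemma~\ref{lemma:nondeg} itself but defers it to the reference \cite{wiuf-feliu}, and your argument---reducing the claim to the kernel identity $\ker(J_\Phi(x^*))=\ker\bigl(N'J_K(x^*)\bigr)\cap\ker(W)=\ker(J_f(x^*))\cap S$, using $\ker(N')=\ker(N)$ to pass between $N'$ and $N$ and $\ker(W)=(S^\perp)^\perp=S$ for the conservation-law block---is precisely the standard linear-algebra argument underlying that citation, so it matches the intended proof.
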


Given a steady state $x^*$, the eigenvalues of $J_f(x^*)$ convey information on the \edit{local asymptotic stability} of $x^*$ \emph{relative to the stoichiometric compatibility class} it belongs to. 
As the rank of $J_f(x^*)$ is at most $s$, $0$ is an eigenvalue of $J_f(x^*)$  with multiplicity at least $d$, and the multiplicity is exactly $d$ if the steady state is non-degenerate.
If  $J_f(x^*)$ has $s$ eigenvalues with negative real part (counted with multiplicity), then $x^*$ is said to be \emph{exponentially stable}, and is in particular asymptotically stable relative to the stoichiometric compatibility class. If at least one of the eigenvalues of  $J_f(x^*)$ has positive real part, then  $x^*$ is said to be \emph{exponentially unstable} (this is not a standard term in the literature but used here for convenience), and is in particular unstable. If $J_f(x^*)$ has $s$ eigenvalues with non-vanishing real part, then the steady state is said to be \emph{hyperbolic}.

\section{Partially open extensions and lifting steady states}\label{sec:thm}
 
In this section we compare two reaction networks $F$ and $G$ with respective kinetics, such that $G$ is obtained from $F$ by adding some inflow and outflow reactions.  
In what follows the objects defined in the previous section, namely $N, S, \mR,W,K$, are indexed by subscripts  $F$ and $G$ indicating the network they are associated with. 
 The  following definition is inspired by \cite{craciun-feinberg-semiopen,craciun-feinberg}.

 \begin{definition}\label{def:partially}
  Consider a reaction network $F$ with kinetics $K_F$ and species set $\mX$ of cardinality $n$.  
\begin{itemize}
\item We say that a reaction network $G$ with kinetics $K_G$ is a \emph{partially open extension of $F$ with respect to the inflow set $\mX^\iota\subseteq \mX$ and the outflow set $\mX^o \subseteq \mX$ }, if the species set of $G$ is $\mX$, the set $\mR_G$ decomposes as a disjoint union
\[ \mR_G = \mR_F \sqcup \{ 0 \ce{->} X\}_{X \in \mX^\iota} \sqcup \{ X \ce{->} 0\}_{X\in \mX^o},\] 
and $K_G$ agrees with $K_F$ for the reactions in $\mR_F$ and \edit{is of mass-action type for the reactions in $\mR_G\setminus \mR_F$.}
\item If additionally $G$ has an inflow and outflow reaction for all species in $\mX$, 
 then $G$ is called a \emph{fully open extension of $F$}.
\end{itemize}
  \end{definition}

Note that $F$ might  have flow reactions for species not in $\mX^o \cup \mX^\iota$.  
\edit{We emphasize that $K_F$ is any kinetics, not necessarily mass-action, while the rate of the inflow  and outflow reactions of species in $\mX^\iota$ and $\mX^o$ respectively is mass-action. }
 Our goal is to study for what sets $\mX^\iota,\mX^o$,  the number and stability of the positive steady states of $F$ extend to $G$ after appropriately choosing  reaction rate constants for the added flow reactions. 
By \cite{craciun-feinberg}, if $F$ has multiple positive non-degenerate steady states, then so does the fully open extension, provided the reaction rate constants of the  flow reactions of $G$ that are not in $F$ are chosen small enough. By \cite{banaji-pantea-inheritance}, the maximal number of exponentially stable positive steady states of $F$ within a stoichiometric compatibility class  is also a lower bound of the maximal number of exponentially stable positive steady states $G$ admits for arbitrary reaction rate constants of the flow reactions. 
Here we relax the condition that \emph{all missing} inflow and outflow reactions must be added to preserve these characteristics.

\medskip
A key ingredient of the main theorem below is to understand the image of the positive orthant by a matrix of conservation laws $W$, which is a polyhedral cone. Thus, in preparation for the main theorem, 
we discuss a well-known property of polyhedral cones in $\R^n$.
Given a matrix $M\in \R^{r\times n}$ with $r\leq n$ and of rank $r$, let $\mathcal{C}(M)$ be the polyhedral cone generated by the columns of $M$: if  $m^{(i)}$ denotes the $i$-th column of $M$, then 
\[ \mathcal{C}(M) = \left\{ \sum_{i=1}^n \lambda_i m^{(i)} \in \R^r \st \lambda_1,\dots,\lambda_n \geq 0    \right\} = \big\{ M \lambda \st \lambda \in \R^n_{\geq 0}\big\} .\]
We let $\mathcal{C}^o(M) $ denote the corresponding open cone, obtained by imposing all $\lambda_i$ to be positive.

For a subset $I\subseteq \{1,\dots,n\}$, we let $M^I$ denote the submatrix of $M$ given by the columns with index in $I$. We  say that $M^{I}$ \emph{generates} $\mathcal{C}^o(M) $ if $\mathcal{C}^o(M)= \mathcal{C}^o(M^{I})$.

\begin{lemma}\label{lem:cone}
Let $M,M^*\in \R^{r\times n}$ be of   rank $r\leq n$  and such that $M^*=P M$ with $P\in \R^{r\times r}$ invertible. Given $I\subseteq \{1,\dots,n\}$,  $M^I$ generates $\mathcal{C}^o(M) $ if and only if $M^{*I}$ generates $\mathcal{C}^o(M^*)$.
\end{lemma}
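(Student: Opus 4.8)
The plan is to exploit the fact that left-multiplication by the invertible matrix $P$ is a linear bijection of $\R^r$ that carries the relevant cones onto one another; once this is set up, the equivalence becomes purely formal. Writing $m^{(i)}$ for the $i$-th column of $M$, the $i$-th column of $M^*=PM$ is $Pm^{(i)}$, so the columns of $M^{*I}$ are exactly the vectors $Pm^{(i)}$ with $i\in I$. Since positivity of the coefficients $\lambda_i$ is a condition on the coefficients alone and is untouched by applying $P$, I would first record the two identities
\[ \mathcal{C}^o(M^*) = P\big(\mathcal{C}^o(M)\big) \qquad\text{and}\qquad \mathcal{C}^o(M^{*I}) = P\big(\mathcal{C}^o(M^I)\big), \]
where $P(\cdot)$ denotes the image of a set under $P$. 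Both follow from the elementary computation
\[ \Big\{ \textstyle\sum_i \lambda_i\, Pm^{(i)} : \lambda_i>0 \Big\} = P\Big\{ \textstyle\sum_i \lambda_i\, m^{(i)} : \lambda_i>0 \Big\}, \]
applied once with the sum ranging over all indices and once with the sum restricted to $i\in I$.

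With these identities in hand, I would argue as follows. Suppose $M^I$ generates $\mathcal{C}^o(M)$, i.e.\ $\mathcal{C}^o(M)=\mathcal{C}^o(M^I)$. Applying $P$ to both sides and invoking the two identities above immediately yields $\mathcal{C}^o(M^*)=\mathcal{C}^o(M^{*I})$, so $M^{*I}$ generates $\mathcal{C}^o(M^*)$. For the converse I would use that $P^{-1}$ exists: from $\mathcal{C}^o(M^*)=\mathcal{C}^o(M^{*I})$, applying $P^{-1}$ recovers $\mathcal{C}^o(M)=\mathcal{C}^o(M^I)$. Alternatively, one may observe that the hypothesis is symmetric in $M$ and $M^*$, since $M=P^{-1}M^*$ with $P^{-1}$ invertible, so a single direction already suffices.

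I do not expect any genuine obstacle here. The only point that requires care is that an equality of sets is both \emph{preserved} and \emph{reflected} by $P$, which is precisely why invertibility of $P$, rather than mere surjectivity or injectivity, is the natural hypothesis. Notably, no property of cones beyond their description as sets of positive combinations of the columns is used, so the argument is entirely linear-algebraic and never appeals to the polyhedral structure; the rank assumption $\rank(M)=\rank(M^*)=r$ plays no role in this particular implication beyond guaranteeing the cones live in $\R^r$ as stated.
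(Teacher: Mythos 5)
Your proof is correct and is essentially the paper's own argument: the paper likewise writes $\mathcal{C}^o(M^*)=\{P(M\lambda)\}=\{P(M^I\lambda)\}=\mathcal{C}^o(M^{*I})$ as a chain of set equalities and then concludes the converse "by symmetry of the argument," exactly as you do via $M=P^{-1}M^*$.
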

\begin{proof}
Assume $M^I$ generates  $\mathcal{C}^o(M)$, such that the images of $\R^n_{> 0}$ by $M$ and by  $M^I$ agree. Now we have 
\begin{align*}
\mathcal{C}^o(M^*)&= \{ M^* \lambda  \st \lambda \in \R^n_{> 0}\} =  \{ P (M \lambda)  \st \lambda \in \R^n_{>  0}\}
=  \{ P (M^{I} \lambda)  \st \lambda \in \R^{|I|}_{> 0}\} \\ & =   \{ M^{*I} \lambda  \st \lambda \in \R^{|I|}_{> 0}\} =
\mathcal{C}^o(M^{*I}),
\end{align*}
where $|I|$ denotes the cardinality of $I$.
By symmetry of the argument, this concludes the proof. 
\end{proof}

Under the hypothesis of Lemma~\ref{lem:cone}, the rows of $M$ and $M^*$ generate the same vector subspace of $\R^n$. 
In view of Lemma~\ref{lem:cone}, the following definition is consistent and independent of the choice of $W$.

\begin{definition}\label{def:generatesW}
Consider a reaction network $F$  with species set $\mX$ of cardinality $n$.
\begin{itemize}
\item For any subset $\mX_1\subseteq \mX,$  the vector subspace of conservation laws with support in $\mX_1$ is defined as
\[ S_{F,\mX_1}^\perp =   S_F^\perp \cap  \big \langle e_{i} \st  X_i\in  \mX_1\big\rangle \edit{\subseteq \R^n}, \]
\edit{where $e_i$ denotes the $i$-th canonical vector of $\R^n$. }
\item Given subsets $\mX_1\subseteq \mX_2 \subseteq \mX$, let $I$ be the index set of the species in $\mX_1$, and $W_2$ be a matrix whose rows form a basis of the conservation laws with support in $\mX_2$, $S_{F,\mX_2}^\perp$. 
Then $\mX_1$ is said to  \emph{generate}  $S_{F,\mX_2}^\perp$ if $W_2^{I}$ generates $\mathcal{C}^o(W_2)$.
\end{itemize}
\end{definition}

With the notation in Definition~\ref{def:generatesW}, let $n_1, n_2$ be the cardinality of $\mX_1,\mX_2$, respectively and $W'$ the submatrix of $W_2$ given by the columns corresponding to the species in $\mX_2$. Note that the other columns of  $W_2$ are zero. Then Definition~\ref{def:generatesW} is simply saying that the image of the positive orthant $\R_{>0}^{n_2}$  by $W'$ agrees with the image of the possibly lower dimensional orthant  $\{0\}^{n_2-n_1}\times \R_{>0}^{n_1}$  by $W'$ (after appropriate reordering of the variables). Note that $\mX_1$ needs to have at least as many species as rows of $W_2$. 
Furthermore, recall that the image of the positive orthant by a matrix of conservation laws determines the possible values of the vector of \edit{conserved quantities} for positive (steady) states of the system.

 \begin{lemma}\label{lem:directsum}
Consider a reaction network $F$ with kinetics $K_F$ and species set $\mX$ of cardinality $n$.
Let $G$ be a partially open extension of $F$ with \edit{inflow set $\mX^\iota$ and outflow set $\mX^o$} as in Definition~\ref{def:partially}. 
Then 
\[ S_G^\perp = S_{F,\mX\setminus (\mX^o \cup \mX^\iota)}^\perp= S_F^\perp \cap  \big \langle e_{i} \st  X_i\in \mX\setminus (\mX^o \cup \mX^\iota) \big\rangle,  \] 
that is, $S_G^\perp$ is the vector subspace of $S_F^\perp$ of vectors with support included in $\mX\setminus (\mX^o \cup \mX^\iota)$.
\end{lemma}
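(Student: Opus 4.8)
The plan is to reduce the statement to a direct computation of the stoichiometric subspace $S_G$ and then dualize. First I would write down the stoichiometric matrix $N_G$. Since the species set of $G$ equals $\mX$ and, by Definition~\ref{def:partially}, the reactions added to $\mR_F$ are the inflows $0\to X_i$ for $X_i\in\mX^\iota$ and the outflows $X_i\to 0$ for $X_i\in\mX^o$, the matrix $N_G$ consists of the columns of $N_F$ together with one extra column per added flow reaction. By the definition of the stoichiometric matrix, the inflow reaction $0\to X_i$ contributes the column $e_i$ and the outflow reaction $X_i\to 0$ contributes the column $-e_i$.

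The key observation is then that, because taking a linear span is insensitive to signs, the outflow columns $-e_i$ and the inflow columns $e_i$ contribute exactly the same vectors to the span. Hence
\[ S_G = \im(N_G) = S_F + \big\langle e_i \st X_i \in \mX^\iota \cup \mX^o \big\rangle, \]
so that the subspace adjoined to $S_F$ depends only on the union $\mX^\iota\cup\mX^o$ and not on the distinction between inflow and outflow species. This is the one place where a little care is needed, and it is really the whole content of the lemma; everything else is formal linear algebra.

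To finish, I would dualize using the standard identity $(A+B)^\perp = A^\perp \cap B^\perp$, obtaining
\[ S_G^\perp = S_F^\perp \cap \big\langle e_i \st X_i \in \mX^\iota \cup \mX^o \big\rangle^\perp. \]
Since $\big\langle e_i \st X_i \in \mX^\iota \cup \mX^o\big\rangle^\perp$ is precisely the set of vectors vanishing on every coordinate indexed by $\mX^\iota\cup\mX^o$, namely $\big\langle e_i \st X_i \in \mX\setminus(\mX^o\cup\mX^\iota)\big\rangle$, I arrive at
\[ S_G^\perp = S_F^\perp \cap \big\langle e_i \st X_i \in \mX\setminus(\mX^o\cup\mX^\iota)\big\rangle, \]
which is exactly $S_{F,\mX\setminus(\mX^o\cup\mX^\iota)}^\perp$ in the notation of Definition~\ref{def:generatesW}, and coincides with the stated description of $S_G^\perp$ as the vectors of $S_F^\perp$ whose support is contained in $\mX\setminus(\mX^o\cup\mX^\iota)$. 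I do not anticipate any real obstacle here: the only substantive point is the sign remark above, and the rest is the routine fact that orthogonal complements turn sums into intersections and that the annihilator of a coordinate subspace is the complementary coordinate subspace.
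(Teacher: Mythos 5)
Your proof is correct and follows essentially the same route as the paper's: both establish $S_G = S_F + \big\langle e_i \st X_i \in \mX^o\cup\mX^\iota\big\rangle$ and then dualize via $(A+B)^\perp = A^\perp\cap B^\perp$ together with the identification of the annihilator of a coordinate subspace with the complementary coordinate subspace. The only difference is that you spell out the sign observation (inflow columns $e_i$ versus outflow columns $-e_i$ spanning the same lines) and the annihilator step, which the paper leaves implicit.
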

\begin{proof}
Recall that $S_G,S_F$ are vector subspaces of $\R^n$. 
By definition we have 
\[ S_G = S_F   + \big\langle e_{i} \st  X_i\in \mX^o \cup \mX^\iota \big\rangle,\]
where $e_i$ denotes the $i$-th canonical vector of $\R^n$. 
Hence 
\[ S_G^\perp = S_F^\perp \cap  \big\langle e_{i} \st  X_i\in \mX^o \cup \mX^\iota \big\rangle^\perp = 
S_{F}^\perp   \cap \big \langle e_{i} \st  X_i\in \mX\setminus (\mX^o \cup \mX^\iota)\big\rangle.\] 
\end{proof}

\edit{
We state now the main theorem, which gives three sufficient conditions (a)--(c) for which the desired lifting properties (i)--(iii) hold. The conditions are presented  in terms of $S^\perp_F$ and not in terms of matrices of conservation laws. We clarify how to check that the conditions hold using matrices of conservation laws  in Remark~\ref{rk:matrix} below. 
}

 \begin{theorem}\label{thm:main}
Consider a reaction network $F$ with kinetics $K_F$ and species set $\mX$ of cardinality $n$.
Let $G$ be a partially open extension of $F$ with respect to the inflow set  $\mX^\iota\subseteq \mX$ and the outflow set $\mX^o \subseteq \mX$, and with kinetics $K_G$. Let  $d_1$ be the dimension of \edit{$S_{F,\mX^o}^\perp$}.
Assume that
\begin{itemize}
\item[(a)] \edit{The species in the inflow set $\mX^\iota$  that are not in the outflow set $\mX^o$  are non-conserved species of $F$.}

\item[(b)] \edit{$S_F ^\perp$ decomposes as the direct sum of the vector subspace of vectors with support in $\mX^o$ and the vector subspace of vectors with support in $\mX\setminus \mX^o$: }
\[ \edit{S_F ^\perp = S_{F, \mX^o}^\perp \ \oplus \ S_{F,\mX\setminus \mX^o}^\perp}.
\] 
\item[(c)] \edit{The set }$\mX^\iota$ generates  $S_{F, \mX^o}^\perp$.
\end{itemize}
Then the following statements hold:
\begin{enumerate}[(i)]
\item If $F$ has at least $\ell$ positive non-degenerate steady states $c_1,\dots,c_\ell$ in one stoichiometric compatibility class $(x_0+S_F)\cap  \Omega$, then there exists a choice of reaction rate constants for the flow reactions of $G$ not in $F$ such that $G$ with this kinetics has at least $\ell$ positive non-degenerate steady states $c_1',\dots,c_\ell'$ in the  stoichiometric compatibility class $(x_0+S_G)\cap \Omega$.
\item With the appropriate numbering of steady states in (i), if $J_{f_F}(c_j)$ has at least $r_1$ eigenvalues with positive real part and $r_2$ eigenvalues with negative real part, then $J_{f_G}(c'_j)$  has \edit{at least}
 $r_1$ eigenvalues with positive real part and $r_2+d_1$ eigenvalues with negative real part. 
\item If $F$ has at least $\ell_1$ exponentially stable and $\ell_2$ exponentially unstable non-degenerate steady states in $(x_0+S_F)\cap \R^n_{> 0}$, then so does $G$ in $(x_0+S_G)\cap \R^n_{> 0}$, for a choice of reaction rate constants for the flow reactions of $G$ not in $F$ \edit{(which are endowed with mass-action kinetics by definition}).
\end{enumerate}
In particular, a choice of reaction rate constants of the flow reactions of $G$ not in $F$  such that \edit{(i)--(iii)} hold can be found as 
\begin{align*}
X_j &  \ce{->[\theta]} 0,  \quad X_j\in \mX^o, \qquad 0 \ce{->[\theta \widehat{x}_j]} X_j, \quad X_j\in \mX^\iota,
\end{align*}
with \edit{$\theta>0$ small enough, and where $\widehat{x}\in \R^n_{\geq 0}$ is any fixed vector with support $\mX^\iota$ satisfying $W_1 \widehat{x}= W_1 x_0$ for $W_1$ a matrix whose rows form a basis of $S_{F, \mX^o}^\perp$.}
\end{theorem}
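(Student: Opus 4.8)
The plan is to realise $G$ as an explicit one-parameter perturbation of $F$ and to continue the steady states $c_1,\dots,c_\ell$ by the implicit function theorem, but working inside the \emph{unchanged} class $(x_0+S_F)\cap\Omega$ rather than the larger class $(x_0+S_G)\cap\Omega$. With the proposed rate constants the species-formation rate function of $G$ is
\[ f_G(x)=f_F(x)+\theta(\widehat x-Qx), \]
where $Q$ is the diagonal matrix with $Q_{ii}=1$ for $X_i\in\mX^o$ and $0$ otherwise. Using (b), (a) and Lemma~\ref{lem:directsum}, I fix a matrix of conservation laws $W_F=\bigl(\begin{smallmatrix}W_1\\ W_G\end{smallmatrix}\bigr)$ whose top $d_1$ rows $W_1$ are a basis of $S_{F,\mX^o}^\perp$ (supported on $\mX^o$) and whose bottom $d_G:=\dim S_G^\perp$ rows $W_G$ are a basis of $S_G^\perp=S_{F,\mX\setminus\mX^o}^\perp$ (supported on $\mX\setminus\mX^o$); by (b) this is a basis of $S_F^\perp$, so $d_F:=\dim S_F^\perp=d_1+d_G$. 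Then $W_1Q=W_1$, $W_GQ=0$, and by (a) the columns of $W_G$ indexed by $\mX^\iota$ vanish, so $W_G\widehat x=0$ whenever $\widehat x$ is supported on $\mX^\iota$. Finally $W_1x_0=W_1c_1\in\mathcal{C}^o(W_1)$ since $c_1\in\R^n_{>0}$, so (c) and Definition~\ref{def:generatesW} furnish a vector $\widehat x\ge0$ with support exactly $\mX^\iota$ and $W_1\widehat x=W_1x_0$; this $\widehat x$ fixes the inflow rate constants.

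Next I reduce to $(x_0+S_F)\cap\Omega$. For $x\in x_0+S_F$ one has $W_1x=W_1x_0=W_1\widehat x$ and $W_Gx=W_Gx_0$, whence $W_F(\widehat x-Qx)=0$, i.e.\ $\widehat x-Qx\in S_F$; similarly $Qv\in S_F$ for $v\in S_F$. Hence
\[ g_\theta(x):=f_F(x)+\theta(\widehat x-Qx) \]
maps $(x_0+S_F)\cap\Omega$ into $S_F$, and as $g_\theta=f_G$ its zeros there are exactly the steady states of $G$ lying in $x_0+S_F$. At $\theta=0$ we recover $f_F$, whose zeros are $c_1,\dots,c_\ell$; non-degeneracy of $c_j$ says precisely that $A:=J_{f_F}(c_j)|_{S_F}\colon S_F\to S_F$ is invertible, and the differential of $g_\theta$ restricted to $S_F$ is $J_{f_F}(x)|_{S_F}-\theta Q|_{S_F}$, which equals $A$ at $(c_j,0)$. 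The implicit function theorem then produces, on a common interval $[0,\theta_0)$, smooth branches $c_j'(\theta)\in(x_0+S_F)\cap\R^n_{>0}$ with $c_j'(0)=c_j$ and $g_\theta(c_j'(\theta))=0$. Each $c_j'(\theta)$ is a positive steady state of $G$ in $(x_0+S_F)\subseteq(x_0+S_G)$, the branches remain distinct, and one $\theta$ serves all $j$; this is (i) up to non-degeneracy.

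The core is the eigenvalue count (ii). Since $\widehat x$ is constant, $J_{f_G}(c_j'(\theta))=J_{f_F}(c_j'(\theta))-\theta Q$, and every $\omega\in S_G^\perp$ is a left null vector of it ($\omega^\top J_{f_F}=0$ as $\omega\in S_F^\perp$, and $\omega^\top Q=0$ by support), so $0$ is an eigenvalue of multiplicity at least $d_G$ for all $\theta$. For the non-degenerate $c_j$, $\im J_{f_F}(c_j)\subseteq S_F$ and $\ker J_{f_F}(c_j)\cap S_F=\{0\}$ give $\R^n=S_F\oplus\ker J_{f_F}(c_j)$ and make the zero eigenvalue of $J_{f_F}(c_j)$ semisimple of multiplicity $d_F$, the other $s_F:=\dim S_F$ eigenvalues being those of $A$ ($r_1$ with positive, $r_2$ with negative real part). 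I apply first-order degenerate perturbation theory to this $d_F$-fold zero: taking columns $R$ spanning $\ker J_{f_F}(c_j)$ with $W_FR=I_{d_F}$ and the rows of $W_F$ as biorthonormal left null vectors, and writing $J_{f_G}(c_j'(\theta))=J_{f_F}(c_j)+\theta(E-Q)+O(\theta^2)$ with $E$ the derivative of $J_{f_F}$ along $\dot c_j'(0)$, the reduced perturbation matrix is $W_F(E-Q)R$. Differentiating the identity $W_FJ_{f_F}(x)\equiv0$ gives $W_FE=0$, so only $-Q$ survives:
\[ W_F(E-Q)R=-W_FQR=-\begin{pmatrix}W_1\\0\end{pmatrix}R=-\begin{pmatrix}I_{d_1}&0\\0&0\end{pmatrix}, \]
using $W_1Q=W_1$, $W_GQ=0$ and the block form of $W_FR=I_{d_F}$. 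Its eigenvalues are $-1$ (with multiplicity $d_1$) and $0$ (with multiplicity $d_G$); hence of the $d_F$ eigenvalues of $J_{f_G}(c_j'(\theta))$ emanating from $0$, exactly $d_G$ remain at $0$ (the forced ones) while the remaining $d_1$ equal $-\theta+o(\theta)$, so they have negative real part for small $\theta$. Together with the $r_1$ and $r_2$ eigenvalues inherited from $A$, which keep their sign by continuity, this yields at least $r_1$ eigenvalues with positive and at least $r_2+d_1$ with negative real part, proving (ii). Shrinking $\theta_0$ so that all $s_G=s_F+d_1$ nonzero eigenvalues avoid the imaginary axis gives non-degeneracy, finishing (i); and (iii) is the case $r_1=0,\ r_2=s_F$ (all $s_G$ relevant eigenvalues negative, hence exponentially stable) respectively $r_1\ge1$ (exponentially unstable).

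The principal obstacle is the degeneracy at $\theta=0$: the full map $\Phi_G$ has a singular Jacobian there, because $S_G\supsetneq S_F$ contributes spurious kernel directions (a dimension count gives $\dim(\ker J_{f_F}(c_j)\cap S_G)=d_1$), so one cannot continue along $\Phi_G=0$ directly. Restricting the continuation to the unchanged class $x_0+S_F$, where the pertinent linearisation is the invertible $A$, is exactly what repairs this, and it is made possible by (a)--(c): (a) and (b) yield the clean block basis with $W_G\widehat x=0$, while (c) guarantees a positive $\widehat x$ with $W_1\widehat x=W_1x_0$, which is what confines the lifted states to $x_0+S_F$. The second delicate point is that the $d_1$ eigenvalues leaving the origin share the repeated first-order value $-\theta$, so I must ensure no higher-order correction pushes any of them back across the imaginary axis; the vanishing $W_FE=0$ (so the drift of $c_j'(\theta)$ does not enter the reduced perturbation) and the fact that the reduced perturbation is diagonal, hence semisimple, keep the leading term uniformly equal to $-\theta$, which suffices for small $\theta$.
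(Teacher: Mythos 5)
Your part (i) is essentially the paper's own proof in geometric clothing: the paper's auxiliary map $H$ carries the constraints $W_1\pi_1(x)=T_1$ and $W_2\pi_2(x)=T_2$ among its components, which is exactly your device of confining the continuation to the old class $x_0+S_F$, and both arguments use the same vector $\widehat{x}$ (whose existence follows from (c) in the same way) and the same rate constants $\theta$, $\theta\widehat{x}_j$. The real divergence is in (ii), and that is where there is a genuine gap. The paper assumes only that $K_F$ is a $\mC^1$ kinetics, so $x\mapsto J_{f_F}(x)$ is merely continuous; the matrix $E$ in your argument (the derivative of $J_{f_F}$ along $\dot c_j'(0)$) need not exist, and the expansion $J_{f_G}(c_j'(\theta))=J_{f_F}(c_j)+\theta(E-Q)+O(\theta^2)$ is unavailable. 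Under the stated hypotheses all you may write is $J_{f_G}(c_j'(\theta))=J_{f_F}(c_j)+o(1)-\theta Q$, and an $o(1)$ error term (say of order $\sqrt{\theta}$) swamps the order-$\theta$ drift you are tracking, so first-order degenerate perturbation theory cannot isolate the $-\theta$ group: your reduced-matrix computation is not justified without strengthening the hypothesis to $K_F\in \mC^2$. Since your non-degeneracy claim in (i) is deduced from this eigenvalue count, the gap propagates to (i) as well.

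The fix is short, and it is exactly what the paper does: drop perturbation theory altogether. You have already established the two identities that matter, namely $\omega^t J_{f_F}(x)=0$ for all $x$ (differentiate nothing; this holds because $\omega\in S_F^\perp$ and $\im J_{f_F}(x)\subseteq S_F$) and $\omega^t Q=\omega^t$ for every row $\omega^t$ of $W_1$ (support in $\mX^o$). Hence $\omega^t J_{f_G}(c_j'(\theta))=\omega^t J_{f_F}(c_j'(\theta))-\theta\,\omega^t Q=-\theta\,\omega^t$ \emph{exactly}: the $d_1$ rows of $W_1$ are exact left eigenvectors of $J_{f_G}$ with eigenvalue $-\theta$, for every $\theta$ and at every point, with no smoothness beyond $\mC^1$ and no Kato machinery. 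Combined with continuity of the remaining $s_F$ eigenvalues (which only needs $J_{f_G}(c_j'(\theta))\to J_{f_F}(c_j)$) this gives (ii), (iii), and the non-degeneracy in (i) by your own multiplicity count, since then the algebraic multiplicity of the eigenvalue $0$ is exactly $d_G$. (The paper instead gets non-degeneracy directly from the nonsingularity of $\partial_xH$ by left-multiplying the relevant Jacobian by the invertible block matrix built from $P$, which is an alternative you could also use.) One further small repair: in your final shrinking of $\theta_0$, replace ``avoid the imaginary axis'' by ``avoid zero''---non-degenerate steady states of $F$ need not be hyperbolic, so eigenvalues of $A$ may be purely imaginary and cannot be pushed off the axis, but for non-degeneracy they only need to stay away from $0$.
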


We remark that in the notation of Theorem~\ref{thm:main}, $\ell_1+\ell_2$ might not be $\ell$, as not all $\ell$ steady states need to be hyperbolic. 
The proof of Theorem~\ref{thm:main} is given in Section~\ref{sec:proof} below and \edit{relies on the Implicit Function Theorem. To this end we construct a function that depends on $\theta$ that vanishes at the steady states of $F$ for $\theta=0$, and vanishes at  steady states of $G$ for $\theta>0$ with the kinetics in the statement of Theorem~\ref{thm:main}.  This is the standard approach underlying several of the known results lifting properties of steady states, e.g. \cite{feliu:intermediates,banaji-pantea-inheritance,craciun-feinberg}. The special aspect of the scenario considered in this paper is that  the addition of flow reactions typically decreases the dimension of the stoichiometric compatibility classes, and conditions (a)--(c) account for that. The role of conditions (a) and (b) of Theorem~\ref{thm:main}
is to allow for a separation of the conservation laws that involve the species in the outflow set, from those that do not and are not affected by this particular extension. \edit{According to Lemma~\ref{lem:directsum},} the latter are the conservation laws of $G$ 
as by assumption (a) we have $S_{F, \mX^o}^\perp =  S_{F, \mX^o\cup \mX^\iota}^\perp$ and  $ S_{F, \mX\setminus\mX^o}^\perp =  S_{F, \mX\setminus (\mX^o\cup \mX^\iota)}^\perp$. Using condition (c) and the choice of reaction rate constants, we relate the reaction rate constants of the inflow reactions to the conserved quantities of $F$ for the conservation laws with support in $\mX^o$ (which are not in $G$). }

\begin{remark}\label{rk:matrix}
The assumptions of Theorem~\ref{thm:main}  can easily be verified by considering a matrix of conservation laws  $W_F$ of $F$. 
In particular, assumption \edit{(a)} in Theorem~\ref{thm:main} says that for any species in \edit{$\mX^\iota$} for which the outflow reaction is not considered, then the corresponding column \edit{of any matrix of conservation laws} is identically zero.

Assumption \edit{(b)} of Theorem~\ref{thm:main} can be verified using Gauss reduction on $W_F$. Indeed, assuming $\mX$ is ordered such that the species in \edit{$\mX^o$} are the first species, then \edit{(b)} holds if and only if there is a matrix of conservation laws of $F$ of the form 
 \begin{equation}\label{eq:block_diag_form_inflows_outflows}
  W_F=\begin{pmatrix}
      W_1 & 0   \\
      0   & W_2 \\
     \end{pmatrix},
 \end{equation}
where the number of columns of $W_1$ is the cardinality of \edit{$\mX^o$}. It follows by Lemma~\ref{lem:directsum} \edit{and assumption (a)} that  $(0 \ W_2)$  is a matrix of conservation laws for $G$, that is, its rows form a basis of $S_G^\perp$. In particular $S_G$  is the kernel of $(0 \ W_2)$. 
The blocks $W_1$ and $W_2$ might  be empty.

Finally, assumption (c) can be verified by determining the rays $u_1,\dots,u_{k}$ of the cone $\mC(W_1)$. Then the columns of $W_1$ corresponding to $\mX^\iota$ must contain scalar multiples of all the vectors $u_1,\dots,u_{k}$.
Note that by assumption (a), assumption (c) holds for $\mX^\iota$ if and only if it holds for $\mX^{\iota}\cap \mX^o$.
\end{remark}
 
 \begin{remark}\label{rk:practical}
 Assume a set of species \edit{$\mX^o\subseteq\mX$} is given that satisfies assumption (b) of Theorem~\ref{thm:main}. Consider $\mX^\iota_1,\ldots,\mX^\iota_k\subseteq\mX^o$ the distinct smallest sets such that (c) holds (smallest in the sense that no proper subset of $\mX^\iota_j$ satisfies (c)).  Then $\mX^\iota$ is a set of inflow species for which (a) and (c) are satisfied if and only if $\mX^\iota_j\subseteq \mX^\iota$ for some $j=1,\ldots,k$ and further $\mX^\iota \setminus \mX^o$ consists of non-conserved species.
 If the rank of $W_1$ is $d_1$, then any set $\mX^\iota_j$ contains at least $d_1$ elements, providing a lower bound on the size of the set.   
\end{remark}

As an easy consequence of  Theorem~\ref{thm:main}, we recover two known cases, namely the case of fully open extensions as well as the case of $S_G=S_F$. In the latter case, as $F$ is a subnetwork of $G$,  the results on subnetworks in \cite{joshi-shiu-II} apply.

\begin{corollary}\label{cor:main}
Consider a reaction network $F$ with kinetics $K_F$ and species set $\mX$ of cardinality $n$, and the following two cases:
\begin{itemize}
\item[(1)] $G$ is the fully open extension of $F$.
\item[(2)] $G$ is a partially open extension of $F$ with $\mX^\iota$ and $\mX^o$ consisting only of non-conserved species of $F$, hence $S_F=S_G$.
\end{itemize}
For both cases, conclusions (i), (ii) and (iii) of Theorem~\ref{thm:main} hold. 
\end{corollary}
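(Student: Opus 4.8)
The plan is to observe that \Cref{cor:main} is not a new argument but a specialization: in each of the two cases I would simply verify that hypotheses (a), (b), (c) of \Cref{thm:main} hold, after which conclusions (i)--(iii) follow verbatim by invoking the theorem. So the whole proof reduces to a routine check of the three conditions against the defining features of each case, plus bookkeeping of the quantity $d_1 = \dim S_{F,\mX^o}^\perp$.

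For case (1), the fully open extension, we have $\mX^\iota = \mX^o = \mX$. Condition (a) is vacuous since $\mX^\iota \setminus \mX^o = \emptyset$. For (b), because $\mX^o = \mX$ we get $S_{F,\mX^o}^\perp = S_F^\perp$ and $S_{F,\mX\setminus\mX^o}^\perp = \{0\}$, so the required decomposition is the trivial one $S_F^\perp = S_F^\perp \oplus \{0\}$. For (c), since $\mX^\iota = \mX^o$ the index set $I$ of $\mX^\iota$ comprises all columns of a basis matrix $W_1$ of $S_{F,\mX^o}^\perp$, whence $W_1^I = W_1$ and $\mathcal{C}^o(W_1^I) = \mathcal{C}^o(W_1)$ trivially. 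Here $d_1 = \dim S_F^\perp = d$, consistent with conclusion (ii) pushing all $d$ zero eigenvalues to negative real part in a fully open extension.

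For case (2), every species in $\mX^o \cup \mX^\iota$ is non-conserved, i.e.\ its column in any matrix of conservation laws vanishes. Hence no nonzero vector of $S_F^\perp$ has support meeting $\mX^o$, which gives $S_{F,\mX^o}^\perp = \{0\}$ and $S_{F,\mX\setminus\mX^o}^\perp = S_F^\perp$; this is exactly decomposition (b) and shows $d_1 = 0$. Condition (a) holds since $\mX^\iota$, and in particular $\mX^\iota \setminus \mX^o$, consists of non-conserved species. Condition (c) asks that $\mX^\iota$ generate the trivial subspace $S_{F,\mX^o}^\perp = \{0\}$, which holds vacuously. Applying \Cref{lem:directsum} with $\mX^o \cup \mX^\iota$ non-conserved confirms $S_G^\perp = S_F^\perp$, hence $S_G = S_F$ as asserted in the statement, and with $d_1 = 0$ conclusion (ii) correctly reports $r_2 + d_1 = r_2$ eigenvalues with negative real part.

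The verification is almost entirely formal, so the only place I expect to demand any care is the degenerate situation in case (2) where $S_{F,\mX^o}^\perp = \{0\}$: one must be comfortable treating $W_1$ as an empty (zero-row) matrix, with $\mathcal{C}^o(W_1)$ sitting in a zero-dimensional ambient space, so that both the direct-sum decomposition (b) and the generation condition (c) hold vacuously rather than through a genuine cone computation. Once (a)--(c) are in place in each case, conclusions (i)--(iii) are precisely those delivered by \Cref{thm:main}, completing the argument.
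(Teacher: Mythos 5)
Your proposal is correct and takes essentially the same approach as the paper: both verify hypotheses (a)--(c) of Theorem~\ref{thm:main} case by case, with (b) reducing to the trivial decomposition $S_F^\perp=S_{F,\mX^o}^\perp\oplus\{0\}$ in case (1) and $S_F^\perp=\{0\}\oplus S_{F,\mX\setminus\mX^o}^\perp$ in case (2), and (a), (c) holding trivially. Your extra bookkeeping (the values $d_1=d$ and $d_1=0$, and the empty-matrix reading of $W_1$ in case (2)) only makes explicit what the paper's one-line verification leaves implicit.
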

\begin{proof}
We verify that  assumption (a), (b) and (c) of Theorem~\ref{thm:main} hold in  the two cases. 

(1) (a) and (c) hold trivially as \edit{$\mX^\iota= \mX^o$}. We have \edit{$S_{F,\mX\setminus \mX^o}^\perp=\emptyset$} and  (b) holds.  

(2) \edit{Assumptions  (a) and (c) hold trivially.   (b) holds since  $ S_{F, \mX^o}^\perp=\emptyset$.}
\end{proof}

Before giving the proof of Theorem~\ref{thm:main}, we illustrate it with several examples. 
Examples ~\ref{ex:inflows_outflows_2} and \ref{ex:2site} illustrate further that the conditions of Theorem~\ref{thm:main} cannot easily be relaxed.

 \begin{example}\label{ex:inflows_outflows_1}
  Consider the reaction network in Example~\ref{ex:main}, equipped with mass-action kinetics
\[ F\colon \quad X_1+X_4 \ce{<=>} X_2  \ce{<=>} X_3+X_4,\]
and the following partially open extension of $F$: 
  \begin{align*}
   G\colon\quad & X_1 +X_4 \ce{<=>} X_2  \ce{<=>} X_3+X_4\\
                & X_1 \ce{<=>} 0 \qquad X_2 \ce{->} 0 \qquad X_3 \ce{->} 0 \qquad X_4 \ce{<=>} 0.
  \end{align*}
Here $\mX^o=\mX$ and $\mX^\iota=\{X_1,X_4\}$. 
Let us verify that $F,G$ satisfy the assumptions of Theorem~\ref{thm:main}.  
A matrix of conservation laws of $F$ is \edit{(reference to the species is added at the top)}:
\edit{\[ W_F=
\begin{blockarray}{cccc}
 X_1 & X_2 & X_3 & X_4 \\
\begin{block}{(cccc)}
         1 & 1 & 1 & 0 \\ 0 & 1 & 0 & 1 \\
 \end{block}
     \end{blockarray}.\] }%
     
          \vspace{-0.5cm}
  We write $W_F$ in the form \eqref{eq:block_diag_form_inflows_outflows} by choosing $W_1=W_F$ and $W_2$ empty, and check the conditions of Theorem~\ref{thm:main}. 
As $\mX^\iota\subseteq \mX^o=\mX$, assumptions  (a) and (b) hold.  Finally, the cone $\mathcal{C}^o(W_F)$ is generated
by the first and last columns of $W_F$. Hence, the assumptions of Theorem~\ref{thm:main} hold, and the conclusions \edit{(i)--(iii)} regarding lifting properties of the steady states of $F$ to $G$ apply. 

If $F$ models a reversible enzymatic reaction, then the extension $G$ includes degradation of every species, and assimilation of the substrate $X_1$ and the enzyme $X_4$  from the external environment.

Observe that as $\mathcal{C}^o(W_F)$ also is generated by the third and fourth columns of $W_F$, Theorem~\ref{thm:main} also applies to the partially open extension with inflow set \edit{$\mX^\iota=\{X_3,X_4\}$}. 
 \end{example}
 
 \begin{example}(Hybrid histidine kinase)\label{ex:inflows_outflows_2}
We consider  a well-studied simplified model of a hybrid histidine kinase HK with two phosphorylated sites and transference of  the phosphate group to a histidine phosphotransferase Hpt \cite{feliu:unlimited}. The reaction network $F$
is
  \begin{align*}
   F\colon\quad & {\rm HK}_{00} \ce{->} {\rm HK}_{p0} \ce{->}{\rm HK}_{0p} \ce{->} {\rm HK}_{pp}   \qquad  {\rm Hpt}_{p} \ce{->} {\rm Hpt}_{0}
\\ & 
 {\rm HK}_{pp} + {\rm Hpt}_{0}   \ce{->} {\rm HK}_{p0}+{\rm Hpt}_{p} \qquad {\rm HK}_{0p}+{\rm Hpt}_{0} \ce{->} {\rm HK}_{00}+{\rm Hpt}_{p}.              
  \end{align*}
Assuming mass-action kinetics, this network can have one or three positive non-degenerate steady states, depending on the choice of reaction rate constants and stoichiometric compatibility class \cite{feliu:unlimited}. If it has three positive non-degenerate steady states, then two of them are exponentially stable and the other exponentially unstable. If it  has only one, then it is exponentially stable \cite{torres:bistability}.

Here we explore different partially open extensions that satisfy the assumptions of Theorem~\ref{thm:main}. 
To this end, we consider the following matrix of conservation laws for $F$, \edit{where the species set is ordered as indicated by the column labels}:
\edit{\[W_F=\begin{blockarray}{cccccc}
{\rm HK}_{00} & {\rm HK}_{p0} & {\rm HK}_{0p} & {\rm HK}_{pp} & {\rm Hpt}_{0} & {\rm Hpt}_{p} \\
\begin{block}{(cccccc)}
         1 & 1 & 1 & 1 & 0 & 0 \\ 
         0 & 0 & 0 & 0 & 1 & 1\\
\end{block} 
     \end{blockarray}.\]}%
     
               \vspace{-0.5cm}
It follows that the sets \edit{$\mX^o_1 = \{ {\rm HK}_{00} , {\rm HK}_{p0} ,{\rm HK}_{0p} , {\rm HK}_{pp} \}$ and $\mX^o_2=\{{\rm Hpt}_{0} , {\rm Hpt}_{p}\}$} both satisfy assumption \edit{(b)} of Theorem~\ref{thm:main}. With $\mX^\iota_1=\{{\rm HK}_{00} \}$ or $\mX^\iota_2=\{{\rm Hpt}_{0} \}$,   assumptions \edit{(a)} and (c) of Theorem~\ref{thm:main} hold for both \edit{pairs of sets indicated by the subindex}. We obtain the following two partially open extensions
  \begin{align*}
   G_1\colon\quad &
 {\rm HK}_{00} \ce{->} {\rm HK}_{p0} \ce{->}{\rm HK}_{0p} \ce{->} {\rm HK}_{pp}   \qquad  {\rm Hpt}_{p} \ce{->} {\rm Hpt}_{0}
\\ & 
 {\rm HK}_{pp} + {\rm Hpt}_{0}   \ce{->} {\rm HK}_{p0}+{\rm Hpt}_{p} \qquad {\rm HK}_{0p}+{\rm Hpt}_{0} \ce{->} {\rm HK}_{00}+{\rm Hpt}_{p} \\
&   {\rm HK}_{00}   \ce{<=>} 0 \qquad  {\rm HK}_{p0} \ce{->} 0 \qquad {\rm HK}_{0p} \ce{->} 0  \qquad {\rm HK}_{pp}
 \ce{->} 0 
\\[6pt]
   G_2\colon\quad &  {\rm HK}_{00} \ce{->} {\rm HK}_{p0} \ce{->}{\rm HK}_{0p} \ce{->} {\rm HK}_{pp}   \qquad  {\rm Hpt}_{p} \ce{->} {\rm Hpt}_{0}
\\ & 
 {\rm HK}_{pp} + {\rm Hpt}_{0}   \ce{->} {\rm HK}_{p0}+{\rm Hpt}_{p} \qquad {\rm HK}_{0p}+{\rm Hpt}_{0} \ce{->} {\rm HK}_{00}+{\rm Hpt}_{p} \\
&   {\rm Hpt}_{0}   \ce{<=>} 0 \qquad  {\rm Hpt}_{p} \ce{->} 0.
  \end{align*}
We conclude that for appropriate choices of flow reaction rate constants,  these networks admit three positive steady states in some stoichiometric compatibility class, of which two are exponentially stable and the other is exponentially unstable.  
Network $G_1$ models the situation in which all phosphoforms of HK are degraded or exit the system, and the non-phosphorylated form is synthesized. Similarly, network $G_2$ models the situation in which both phosphoforms of Hpt are degraded, but only the non-phosphorylated form is synthesized or enters the system. 

Adding inflow reactions for the species in $\mX^o_1$ or $\mX_2^o$ does not alter the conclusion. Additionally, by joining  the inflow and outflow sets of both extensions, we obtain a new extension where Theorem~\ref{thm:main} also applies. 

We investigate what happens when flow reactions for \edit{Hpt$_0$}, Hpt$_p$ are added in ways that do not satisfy  assumptions \edit{(a)--(c)}. By analyzing the resulting systems in detail \edit{with mathematical software like Maple, or using for instance the CRNT Toolbox \cite{crnttoolbox}}, we easily see that if the outflow from Hpt$_{p}$ in $G_2$ is removed or if only outflow from \edit{Hpt$_0$} and input to Hpt$_{p}$  are considered, then the network has at most one positive steady state. If the outflow from \edit{Hpt$_0$} is removed or if only inflow and outflow reactions for Hpt$_{p}$ are considered, then the network has at most two positive steady states (\edit{as the steady state equations reduce to a degree two polynomial equation in the concentration of HK$_{00}$}). 
Hence, in all four cases, the conclusions of Theorem~\ref{thm:main} do not hold.
 \end{example}

\begin{example}(Double phosphorylation cycle) \label{ex:2site}
We consider a double phosphorylation cycle comprising a substrate $S$ with two ordered phosphorylation sites admitting  three phosphoforms $S_0,S_1,S_2$ with none, one, or two phosphate groups attached respectively. We assume phosphorylation and dephosphorylation are enzyme mediated and proceed in a sequential and distributive way. This gives rise to the following reaction network:
\begin{align*}
E+S_0 & \ce{<=>} ES_0 \ce{->}  E+ S_1  \ce{<=>} ES_1 \ce{->}  E+ S_2 \\
F+S_2 & \ce{<=>} FS_2 \ce{->}  F+ S_1  \ce{<=>} FS_1 \ce{->}  F+ S_0.
\end{align*}
Under mass-action kinetics, this network is known to admit up to three positive non-degenerate steady states \cite{Wang:2008dc}, as well as parameter choices for which there are two exponentially stable positive steady states and one exponentially unstable positive steady state \cite{rendall-2site}.

A matrix of conservation laws   is
\edit{\[W_F=\begin{blockarray}{ccccccccc}
E & F & S_0 & S_1 & S_2 & ES_0 & ES_1 & FS_2& FS_1 \\
\begin{block}{(ccccccccc)} 
         1 & 0 & 0 & 0 & 0  & 1 & 1 & 0 & 0 \\ 
         0 & 1 & 0 & 0 & 0 & 0 & 0  & 1 & 1 \\
         0 & 0 & 1 & 1 & 1 & 1 & 1 &1 & 1 \\
\end{block} 
     \end{blockarray}.\]}%
     
               \vspace{-0.5cm}
The assumptions of Theorem~\ref{thm:main} hold for $\mX^o=\mX$ and $\mX^\iota=\{ E, F, S_0\}$. Hence, in particular, bistability arises for this partially open extension. This implies that we need degradation of all species, but only production of  $E,F,S_0$.

In order to obtain even smaller sets of inflow and outflow reaction that preserve bistability, we can consider the following reduced network 
\begin{align*}
E+S_0 & \ce{->} ES_0 \ce{->}  E+ S_1   \ce{->}  E+ S_2 \\
F+S_2 &  \ce{->}  F+ S_1  \ce{->}  F+ S_0.
\end{align*}
This network admits three positive non-degenerate steady states, and whenever this is the case, two of them are exponentially stable and the other unstable \cite{torres:bistability}.  The original double phosphorylation network is obtained by the addition of the intermediates $ES_1,FS_2,FS_1$ and after making binding reactions reversible. These two modifications are known to preserve the number and stability of the steady states \cite{joshi-shiu-II,feliu:intermediates}.
This reduced network admits the following matrix of conservation laws:
\edit{\[\begin{blockarray}{cccccc}
E & F & S_0 & S_1 & S_2 & ES_0 \\
\begin{block}{(cccccc)} 
         1 & 0 & 0 & 0 & 0  & 1  \\ 
         0 & 1 & 0 & 0 & 0 & 0  \\
         0 & 0 & 1 & 1 & 1 & 1   \\
\end{block} 
     \end{blockarray}.\]}%
Hence the hypotheses of Theorem~\ref{thm:main} hold for $\mX^o=\{E,S_0, S_1,S_2 , ES_0\}$ and $\mX^\iota=\{ E, S_0\}$. In particular, bistability arises for this partially open extension. 
We proceed now to add  the intermediates $ES_1,FS_2,FS_1$ and  make binding reactions reversible to obtain the following network:
\begin{align*}
E+S_0 & \ce{<=>} ES_0 \ce{->}  E+ S_1  \ce{<=>} ES_1 \ce{->}  E+ S_2 \\
F+S_2 & \ce{<=>} FS_2 \ce{->}  F+ S_1  \ce{<=>} FS_1 \ce{->}  F+ S_0 \\
 E & \ce{<=>}  0 \quad  S_0  \ce{<=>}  0 \quad  S_1  \ce{->}  0  \quad  S_2  \ce{->}  0  \quad  ES_0  \ce{->}  0,
\end{align*}
which also admits three positive non-degenerate steady states, two of which are exponentially stable. 
By combining Theorem~\ref{thm:main} with previously known operations that preserve bistability,  we have obtained a smaller partially open extension of the double phosphorylation cycle that also admits bistability.

Actually, the outflow set can be made even smaller. 
The partially open extensions given by $\mX^o=\{ S_0, S_1,S_2\}$ and $\mX^\iota=\{S_0\}$ or  $\mX^o=\mX^\iota=\{E\}$ also admit three positive non-degenerate steady states,  \edit{as it can be derived by using for instance the CRNT Toolbox \cite{crnttoolbox} or  the method in \cite{FeliuPlos}. (See also \cite{joshi-shiu-III} for other methods.)}
The sets $\mX^o, \mX^{\iota}$ do not satisfy the assumptions of Theorem~\ref{thm:main}, as there are no conservation laws with support in $\mX^o$, and it is not evident how it could follow from Theorem~\ref{thm:main} after first reducing the network as we did above. 
\edit{This shows that the conditions of Theorem~\ref{thm:main} are only sufficient and not necessary for the lifting properties. It does not seem straightforward to derive more general conditions that would include this example. For instance, one might think that it is enough to consider as outflow species all the species that are not enzyme-substrate complexes in one or more conservation laws, and choose one species per conservation law to additionally be an inflow species. But} the partially open extension with respect to the sets $\mX^o=\mX^\iota=\{E,F\}$  does not admit three positive non-degenerate steady states, \edit{hence contradicting this hypothesis}. This shows the subtleties in obtaining general results with respect to how to lift properties of steady states of a network to partially open extensions. 
\end{example}

\section{Proof of Theorem~\ref{thm:main}}\label{sec:proof}

This section is devoted to \edit{the proof of}  Theorem~\ref{thm:main}. 
\edit{Let $n_1$ be the cardinality of $\mX^\iota\cup \mX^o$ and $n_{1,o}$ the cardinality of $\mX^o$. We order 
the set of species such that the species in $\mX^o$ are first, then we have the species in $\mX^\iota$ not in $\mX^o$, and finally the species in $\mX \setminus (\mX^\iota \cup \mX^o)$.} 
Consider a matrix $W_F$ of conservation laws of $F$ of the following form \eqref{eq:block_diag_form_inflows_outflows}:
\[ W_F=\begin{blockarray}{ccc}
& n_1 & n_2  \\ 
\begin{block}{c(cc)}
d_1 &   W_1 & 0 \\
d_2 &   0     & W_2 \\
\end{block}  \end{blockarray} \in \R^{d\times n}, \] 
\edit{where throughout block matrix labels indicate the number of rows and columns of the corresponding blocks. }%
Such a matrix exists by assumption \edit{(b)} (c.f. Remark~\ref{rk:matrix}). 
\edit{By assumption (a), the last $n_1-n_{1,o}$ columns of $W_1$ are zero. Note that as opposed to Remark~\ref{rk:matrix}, the block matrix $W_1$ includes the zero columns corresponding to inflow species in $\mX^\iota\setminus \mX^o$. This makes the notation in this proof lighter.}
Consider the stoichiometric matrix $N_F\in \R^{n\times m}$ of $F$ of rank $s$, and  write it in block form  as
 \[ N_F=\begin{pmatrix} 
      N_1 \\
      N_2
     \end{pmatrix},\qquad N_1\in\R^{n_1\times m}, \ N_2\in\R^{n_2\times m}.\]
 Let $s_1$ and $s_2$ be the rank of $N_1$ and $N_2$, respectively.
Fix two matrices $N_1^\prime\in \R^{s_1\times m}$ and $N_2^\prime\in \R^{s_2\times m}$ of rank $s_1,s_2$ and such that $\ker(N_1')=\ker(N_1)$,  $\ker(N_2')=\ker(N_2)$. Then  the matrix
\[ N'_F=\begin{pmatrix} 
      N_1' \\
      N_2'
     \end{pmatrix}\] 
satisfies $\ker(N'_F)=\ker(N_F)$. 
Since the rows of $W_1$ and $W_2$ generate the left kernel of  $N_1$ and $N_2$ respectively, we have
\[ s_1+s_2=n_1-d_1+n_2-d_2=n-d=s. \] 
It follows that $N'_F$ has rank $s$. \edit{Let  $A\in \R^{s_1\times n_1}$ of full rank $s_1$ such that $N_1'=A N_1$.}

\medskip
Assume $F$ has $\ell$ non-degenerate positive steady states $c_1,\dots,c_\ell$ in the stoichiometric compatibility class with equations $W_F x = T$ with $T\in \R^d$, and let $C\subseteq \R^n_{>0}$ be an open subset  containing the steady states.  The steady state equations for $F$ in the given stoichiometric compatibility class are
\[ N'_F K_F(x)= 0,\qquad W_F x = T,\]
which are equivalent to
\begin{equation}\label{eq:eqF}
N_1' K_F(x)=0,\quad N_2' K_F(x)=0,\quad W_F x = T.
\end{equation}

Let \edit{$E_O= \begin{pmatrix}
{\rm id}_{n_{1,o}\times n_{1,o}} & 0 \\ 0 & 0
\end{pmatrix}\in\R^{n_{1}\times n_1}$} be the matrix with zero entries except for the diagonal entries $(i,i)$ for $X_i\in \mX^o$, which are equal to one. Hence $W_1 E_O=W_1$. 
Similarly, let $E_I\in\R^{n_1\times n_1}$ be the  matrix with zero entries except a one in entry $(i,i)$ if  $X_i\in \mX^\iota$. 

Let $T_1,T_2$ be the \edit{conserved quantities} corresponding to the conservation laws given by $W_1,W_2$, respectively, that is, $T=(T_1,T_2)$. 
 Let $\pi_1\colon \R^n\rightarrow\R^{n_1}$ and $\pi_2\colon \R^n\rightarrow\R^{n_2}$ be the projections on the first $n_1$ components and on the last $n_2$ components, respectively. 
Choose \edit{$\widehat{x}\in \R^{n}_{>0}$} with support in $\mX^\iota$ and such that \edit{$(W_1 \  0) \widehat{x}=T_1$, or, equivalently $W_1 \pi_1(\widehat{x})=T_1$}. Such an $\widehat{x}$ exists by  assumption (c) of Theorem~\ref{thm:main}. 
Note that since the support of $\widehat{x}$ is in $\mX^\iota$, we have \edit{ $W_1 \pi_1(\widehat{x})=W_1 E_I \pi_1(\widehat{x})$. }
With this choice, consider now the map
\[ H\colon \R\times C \rightarrow \R^{s_1}\times \R^{s_2}\times \R^{d_1}\times \R^{d_2}\equiv \R^{n} \]
defined by
\begin{align*}
H(\theta,x) &  = \big(\ N_1'K_F(x) - \theta A E_O \pi_1(x) + \theta A E_{I}\edit{\pi_1(\widehat{x})},\ N_2^\prime K_F(x), \\ &  \hspace{3cm}\  W_1 \pi_1(x) -   T_1, \  W_2 \pi_2(x) -   T_2\ \big).
\end{align*}
The function $H$ is $\C^1$ and we will prove statement (i) using the Implicit Function Theorem on $H$.
For that, first note that when $\theta=0$, the equation $H(0,x)=0$ amounts to \eqref{eq:eqF}. Hence, $c_1,\dots,c_\ell\in C$ satisfy the equation $H(0,x)=0$. 
Moreover,
\edit{  \begin{equation}\label{eq:derivative_in_proof_inflows_outflows}
   \partial_x H(\theta, x)=\begin{blockarray}{ccc}
n_1 & n_2 &  \\ 
\begin{block}{(cc)c}
                            N_1'\partial_{\pi_1(x)}K_F(x) - \theta A E_O  & N_1'\partial_{\pi_2(x)}K_F(x) & s_1 \\
                            N_2'\partial_{\pi_1(x)}K_F(x)                     & N_2'\partial_{\pi_2(x)}K_F(x) & s_2 \\
                            W_1                                          & 0 & d_1 \\
                            0                                            & W_2 & d_2 \\
\end{block}                           \end{blockarray}.
  \end{equation}}
  In particular, for $\theta=0$ we have
  $$\partial_x H(0, x)=\begin{pmatrix}
                          N_1'\partial_x K_F(x) \\
                          N_2'\partial_x K_F(x) \\
                          W_F
                         \end{pmatrix} =\begin{pmatrix}
                          N'_F \partial_x K_F(x) \\
                          W_F
                         \end{pmatrix}.$$
This matrix is non-singular when evaluated at $c_1,\dots,c_\ell$ as the steady states are non-degenerate by assumption, see Lemma~\ref{lemma:nondeg}.
For each $i=1,\dots,\ell$, we apply the Implicit Function Theorem to the point $(0,c_i)\in (-\varepsilon,\varepsilon)\times C$ and the function $H$, to conclude that there exists an open interval $I_i\subseteq (-\varepsilon,\varepsilon)$ containing $0$, an open set $U_i\subseteq C$ containing $c_i$, and a differentiable function
$$ h_i\colon I_i \rightarrow U_i,$$
such that for all $\theta\in I_i$, $H(\theta,h_i(\theta))=0$ and $h_i(0)=c_i$. Further, $I_i$ can be chosen small enough such that  the map $\partial_{x} H(\theta,h_i(\theta))$ is non-singular for every $\theta\in I_i$, since it is non-singular at $\theta=0$.
Since all points $c_i$ are distinct, there exist pairwise disjoint open sets $V_i\subseteq C $ containing $c_i$. 
We redefine $U_i$ to be $U_i\cap V_i$, which contains $c_i$, and $I_i$ to be the connected component of the anti-image of $U_i$ by $h_i$ that contains $0$. With these definitions, the images of the maps $h_i\colon I_i\rightarrow U_i$ are pairwise disjoint and  the components of $h_i(\theta)$ are positive.

Consider the open interval  $I=\bigcap_{i=1}^\ell I_i$, which contains $0$. All maps $h_i$ are defined on $I$. If $\theta\in I$, then by construction
\[H(\theta,h_i(\theta))=0 \]
for all $i=1,\dots,\ell$ and all $h_i(\theta)$ are  distinct. 

Part (i) of the theorem will follow if we show that for $\theta$ small enough and positive,  $h_1(\theta),\dots,h_\ell(\theta)$ are positive non-degenerate steady states of $G$ for  a choice of reaction rate constants of the flow reactions, and that they belong to the  stoichiometric compatibility class $x_0+S_G$.

So fix $\theta>0$ with $\theta\in I$. 
\edit{Consider the matrix } 
\[  \widehat{N}_G=\begin{pmatrix}
          N_1  & -E_O &  E_I \\
          N_2 & 0   &  0
         \end{pmatrix} \in \R^{(n_1+n_2)\times \edit{(m+2n_1)}}, \]  
and the matrix 
  \[ \widehat{N}'_G=\begin{pmatrix}
          N_1  &  -E_O &  E_I \\
          N'_2 & 0   & 0
         \end{pmatrix} \in \R^{(n_1+s_2)\times \edit{(m+2n_1)}}\]
\edit{By construction, the stoichiometric matrix $N_G$ of $G$ is
obtained by removing the zero columns  in the second (resp. third) column blocks of $\widehat{N}_G$ corresponding to the species not in $\mX^o$ (resp. $\mX^\iota$). We define $N'_G$ from $\widehat{N}'_G$ analogously. 
Note $\ker(\widehat{N}_G)=\ker(\widehat{N}'_G)$ and also $\ker(N_G)=\ker(N'_G)$ . } By Lemma~\ref{lem:directsum} \edit{and assumption (a)}, the matrix $W_G=(0 \ \ W_2)$ is a matrix of conservation laws for $G$. Hence the rank of $N_G$ is $n-d_2=n_1+s_2$. This implies that $N_G'$ has maximal rank. 
Further, the stoichiometric compatibility class of $G$ containing $c_1,\dots,c_\ell$ has equations 
\begin{equation}\label{eq:WG}
W_G x = T_2 \edit{\qquad\textrm{or equivalently}\qquad W_2 \pi_2(x)=T_2.}
\end{equation}

 Let $K_G^\theta$ be the kinetics of $G$ agreeing with $K_F$ for the common reactions, such that the reaction rate constant of $X_j\to 0$ is $\theta$ if $X_j\in \mX^o$, and the reaction rate constant of $0\to X_j$ is $\theta\widehat{x}_j$ for $X_j\in \mX^\iota$ (where $\widehat x$ is as defined above).
Then 
\[ N_G' K^\theta_G(x) = \begin{pmatrix}  N_1 K_F(x) - \theta E_O \pi_1(x) + \theta E_I  \edit{\pi_1(\widehat{x})} \\ N_2^\prime K_F(x)\end{pmatrix} \in \R^{n_1+s_2}.\]
Hence, the steady states of $G$ in the class defined by \eqref{eq:WG} are the solutions to the equations
  \begin{gather}
   \label{eq:steady_states_opened_network_first_block}
   N_1 K_F(x) - \theta E_O \pi_1(x) + \theta E_I \edit{\pi_1(\widehat{x})} =0,\\ 
   \label{eq:steady_states_opened_network_second_block}
   \edit{W_2 \pi_2(x)=T_2},\qquad N_2^\prime K_F(x)=0.
  \end{gather}

Consider the matrix
\[ P= \begin{pmatrix}
W_1 \\ A    \end{pmatrix} \in \R^{n_1\times n_1},\]
where   $A\in \R^{s_1\times n_1}$ \edit{was fixed above to be of full rank $s_1$ and} such that $N_1'=A N_1$. 
 The matrix $P$ is invertible since $\ker(P)=0$. To see this, note that $W_1z=0$ implies $z$ belongs to the column span of $N_1$, that is $z=N_1 y$ for $y\in \R^{m}$. Then, $0=A\, z =AN_1 y = N_1' y$. As $\ker (N_1')=\ker (N_1)$, this implies $z=0$. 

 Hence, equation~\eqref{eq:steady_states_opened_network_first_block} is equivalent to
\[  PN_1 K_F(x) - \theta P(E_O \pi_1(x)) + \theta P E_I \edit{\pi_1(\widehat{x})} =0. \] 
By construction, as $W_1 E_O=W_1$ \edit{and $W_1N_1=0$}, we have
\begin{align*}
PN_1 K_F(x) & = \begin{pmatrix}
                   0 \\ N_1'K_F(x)
                  \end{pmatrix},\\
P(E_O \pi_1(x)) & = \begin{pmatrix}  W_1 \pi_1(x)  \\  A E_O \pi_1(x) \end{pmatrix}, \\
 P E_I \edit{\pi_1( \widehat{x})}  & =   \begin{pmatrix}    W_1 E_I \edit{\pi_1( \widehat{x})}     \\  A E_I \edit{\pi_1( \widehat{x})}       \end{pmatrix} = \begin{pmatrix}    T_1  \\  A E_I \edit{\pi_1( \widehat{x})}      \end{pmatrix}. 
 \end{align*}
  Then, equation~\eqref{eq:steady_states_opened_network_first_block} holds if and only if
  \begin{equation}\label{eq:out}
   N_1'K_F(x) - \theta A E_O \pi_1(x) + \theta AE_I\edit{\pi_1( \widehat{x})}   =0,\qquad  
 W_1 \pi_1(x) -   T_1 =0.
  \end{equation}
 Now, let $c^*\in C$ be such that $H(\theta, c^*)=0$. Then by definition  \edit{of} $H$, both the equations in \eqref{eq:out}, which are equivalent to \eqref{eq:steady_states_opened_network_first_block},  and the equations \eqref{eq:steady_states_opened_network_second_block} hold. 
It follows that $c^*$ is  a positive steady state of $G$ for the kinetics $K^\theta_G$ in the class defined by \eqref{eq:WG}. 

This gives that $h_1(\theta),\dots,h_\ell(\theta)$ define $\ell$ positive steady states of $G$ for $\theta\in I$ and positive. Using that $\partial_{x} H(\theta,h_i(\theta))$ is non-singular, we prove  that these steady state also are non-degenerate provided $\theta\in I$ and positive. 
For this, fix $c^*\in C$ such that $H(\theta,c^*)=0$ and satisfying that $\partial_xH(\theta,c^*)$ is non-singular. By Lemma~\ref{lemma:nondeg}, $c^*$ is non-degenerate if and only if
  $$J =\begin{pmatrix}
                            N_1\partial_{\pi_1(x)}K_F(c^*) - \theta E_O & N_1\partial_{\pi_2(x)}K_F(c^*)\\
                            N'_2\partial_{\pi_1(x)}K_F(c^*)             & N'_2\partial_{\pi_2(x)}K_F(c^*)\\
                            0                                      & W_2
                           \end{pmatrix}$$
  is non-singular. 
Note that
\[ \begin{pmatrix}
     P & 0\\
     0 & \text{id}_{n_2}
    \end{pmatrix}J =
    \begin{pmatrix}
   -   \theta W_1                               & 0\\
     N'_1\partial_{\pi_1(x)}K_F(c^*) - \theta AE_O & N'_1\partial_{\pi_2(x)}K_F(c^*)\\
     N'_2\partial_{\pi_1(x)}K_F(c^*)               & N'_2\partial_{\pi_2(x)}K_F(c^*)\\
     0                                        & W_2
    \end{pmatrix}.\] 
As $\begin{pmatrix}
     P & 0\\
     0 & \text{id}_{n_2}
    \end{pmatrix}$ is invertible, and the matrix on the right-hand side is non-singular by hypothesis and  \eqref{eq:derivative_in_proof_inflows_outflows}, we conclude that $J$ is non-singular. This shows that $c^*$ is non-degenerate. 

\medskip
This concludes the proof of statement (i). 
Statement (iii) follows from statement (ii).
So all we need is to show statement (ii). 
Consider a steady state $c_i$ of $F$ and the corresponding steady state $h_i(\theta)$ of $G$ as above. Let $J_F$ be the Jacobian of the species formation rate function of $F$ evaluated at $c_i$ and $J_G$  the Jacobian of the species formation rate function of $G$ evaluated at $h_i(\theta)$. 
Then 
\[ J_F =\begin{pmatrix}
                          N_1\partial_x K_F(c_i) \\
                          N_2\partial_x K_F(c_i) 
                                                   \end{pmatrix}, \qquad   J_G =\begin{pmatrix}
                          N_1\partial_x K_F(h_i(\theta)) - \theta \Big(E_O \quad  0_{n_1\times n_2} \Big) \\
                          N_2\partial_x K_F(h_i(\theta)) 
                                                 \end{pmatrix}, \]
where $0_{n_1\times n_2}$ is the zero matrix of size $n_1\times n_2$.  
Then $J_F$ has the eigenvalue $0$ with multiplicity at least $d=d_1+d_2$, and assume further that it has $r_1$ eigenvalues with positive real part, and $r_2$ eigenvalues with negative real part. As $h_i(0)=c_i$,  $J_G$  also has $r_1$ eigenvalues with positive real part, and $r_2$ eigenvalues with negative real part for $\theta>0$ small enough.
All that remains is to show that $J_G$ has $d_1$ additional eigenvalues with negative real part. 

We consider left  eigenvectors of $J_F, J_G$ for convenience. 
Then, as $N_1$ has rank $s_1$, $d$ distinct linearly independent left eigenvectors $u_1,\dots,u_d$ with eigenvalue $0$ of $J_F$ can be chosen such that for $j=1,\dots,d_1$ we have $\pi_1(u_j)^t N_1=0$, $\pi_2(u_j)=0$ (where superscript $t$ denotes the transpose vector). That is, $\pi_1(u_j)$, for $j=1,\dots,d_1$, form a basis of the left kernel of $N_1$.  In particular, $\pi_1(u_j)$  belongs to the row span of $W_1$: $\pi_1(u_j)= v_j^t W_1$ for $v_j\in \R^{d_1}$. Then the equality $v_j^t  W_1 E_O= v_j^t  W_1$ gives $\pi_1(u_j)^t E_O =  \pi_1(u_j)^t$. 
Now, for $j=1,\dots,d_1$, we have
\begin{align*}
u_j^t J_G   & = u_j^t  \begin{pmatrix}   N_1\partial_x K_F(h_i(\theta)) - \theta \Big( E_O \quad  0_{n_1\times n_2}\Big) \\
                          N_2\partial_x K_F(h_i(\theta)) 
                                                   \end{pmatrix}  \\
&  =  \pi_1(u_j)^t N_1\partial_x K_F(h_i(\theta)) +  \pi_2 (u_j)^t N_2\partial_x K_F(h_i(\theta))   - \theta\, \pi_1(u_j)^t  \Big( E_O \quad 0_{n_1\times n_2}\Big)  \\ & = 
                           - \theta\,  \Big( \pi_1(u_j)^t  E_O \quad 0_{1\times n_2} \Big)    = -\theta   u_j^t,
\end{align*}
 where in the last step we use that $\pi_2(u_j)=0$.
This shows that  $-\theta$ is an eigenvalue of $J_G$ with multiplicity at least $d_1$, completing the proof of statement (iii) and thereby the proof of Theorem~\ref{thm:main}. 
 \begin{flushright}
\qed
\end{flushright}

\section*{Acknowledgments}
 EF and CW acknowledge funding from the Independent Research Fund of Denmark.  Part of this work was done while the authors visited the Isaac Newton Institute, Cambridge, UK and we are grateful for the support offered from the institute.  The project was initiated while DC was at the University of Copenhagen.

%

\end{document}